\newtheorem{theorem}{Theorem}
\newtheorem{proposition}[theorem]{Proposition}
\newtheorem{definition}[theorem]{Definition}
\newcommand{\p}{\ensuremath{\mathrm{P}}}
\newcommand{\np}{\ensuremath{\mathrm{NP}}}
\newcommand{\calC}{{\cal C}}
\newcommand{\calE}{{\cal E}}
\newcommand{\dod}{\mathrm{swap}}
\newcommand{\discr}{\mathrm{discr}}
\newcommand{\plur}{\mathrm{plur}}
\newcommand{\reals}{\mathbb R}
\newcommand{\pref}{o} 
\newcommand{\vecpref}{O}
\newcommand{\electionsystem}{{\cal{R}}}
\newcommand{\R}{\electionsystem}
\title{Rationalizations of Condorcet-Consistent Rules via  Distances of Hamming Type}
\author{{Edith Elkind}\\
      School of ECS\\
      University of Southampton, UK
      and\\
	Division of Mathematical Sciences\\
	Nanyang Technological University\\
	Singapore
        \and 
        {Piotr Faliszewski}\\
        Department of Computer Science \\
        AGH University of Science\\ and Technology,
        Krak\'ow\\ Poland
	\and 
        {Arkadii Slinko}\\
	Department of Mathematics\\
	University of Auckland \\
        New Zealand
	}
\date{\today}
\begin{document}

\maketitle

\begin{abstract}
The main idea of the {\em distance rationalizability} approach to view the voters'
preferences as an imperfect approximation to some kind of
consensus is deeply rooted in social choice literature.  It allows one to define (``rationalize'')
voting rules via a consensus class of elections  and a distance: a candidate is said to be an election winner 
if she is ranked first in one of the nearest (with respect to the given 
distance) consensus elections.  It is known that many classic voting rules can be distance rationalized. 
In this paper, we provide new results on distance rationalizability
of several Condorcet-consistent voting rules. In particular, we distance rationalize Young's rule and Maximin rule using distances similar to the Hamming distance. We show that the claim that  
Young's rule can be rationalized by the Condorcet consensus class and the Hamming distance is incorrect; 
in fact, these consensus class and distance yield a new rule which has not been studied before. 
We prove that, similarly to Young's rule, this new rule has a computationally hard winner determination problem.
\end{abstract}

\section{Introduction}\label{sec:intro}

The problem of defining what is meant by an electoral consensus
has been a particulariy contentious one.
Condorcet approached this problem from the point of view of 
pairwise comparisons. He suggested that, if an alternative obtains a simple majority over any other alternative, then it should win the election. This principle is known as {\em Condorcet rule} and the winner as {\em Condorcet alternative}. Despite all its attractiveness this principle has a major drawback: a Condorcet alternative does not always exist. Various methods of extending Condorcet rule to all elections have been proposed; one of the most attractive ways to do so was suggested by \citeA{you:j:extending-condorcet}. He viewed the problem of social choice as a problem in 
pattern recognition. In cases where the ``pattern'' of consensus is unclear---that is, a 
Condorcet alternative does not exist---he suggested to use the 
majority principle and to search for the largest subset of voters for which the pattern is clear and a Condorcet alternative  exists. 
%Nitzan~(1981) 
\citeA{nit:j:closeness} expressed similar ideas with respect to the unanimity rule. 

Distance rationalizability is a framework formalizing this research direction. 
The idea of this framework is to view the voters'
collection of preferences, or a {\em preference profile}, as an imperfect approximation to some kind of
consensus. Identifying the ``closest'' consensus profile  we ``recognize the pattern.'' The winner is then the most preferred
candidate in this closest consensus profile. A voting
rule can be defined by picking a particular notion of a consensus
and a particular notion of closeness. This closeness must be measured by a distance function since violations of the triangle inequality 
may lead to undesirable effects. These ideas has been explored by several
authors~\cite{bai:j:distance-rationalisation,kla:j:copeland-distance,kla:j:borda-vs-condorcet}
under a variety of names;
 a fairly comprehensive list of distance-rationalizability results  
is provided by~\citeA{mes-nur:b:distance-realizability}. 

A surprisingly large number of voting rules have been already classified as distance rationalizable. \citeA{nit:j:closeness} distance rationalized Plurality and Borda,  \citeA{mes-nur:b:distance-realizability}, among other rules, provide distance rationalizations of Veto, Cope\-land, Slater, and STV,
% \footnote{\citeA{mes-nur:b:distance-realizability} also claim that
%   one can distance-rationalize the well-known Young's rule. However,
%   as we show in Section~\ref{sec:condorcet}, the construction provided
%   in~\cite{mes-nur:b:distance-realizability} is incorrect.}
and~\citeA{elk-fal-sli:c:votewise-dr} show that all scoring rules, as well
as the Bucklin rule,\footnote{Also known as majoritarian compromise.} are also
distance rationalizable.  Some rules, like Dodgson's rule or Kemeny rule, have been initially 
defined in terms of a consensus class and a distance so no additional rationalization was required.  
Effectively the idea has developed into a project of classification 
of existing voting rules by two parameters: a consensus class and a distance.

However, up to date this classification  has contained some gaps. Paradoxically enough one of them relates to 
Young's rule which appeared to be notoriously tricky to rationalize. %Meskanen and Nurmi~
\citeA{mes-nur:b:distance-realizability} claim that Young's rule obtains if we employ the Condorcet 
consensus class and the Hamming distance over the profiles, however this is not true. 

The first goal of this paper is to show that the statement of Meskanen and Nurmi is wrong. In fact,  
using the Condorcet  consensus class and the Hamming distance we obtain a new rule which is different from 
Young's rule and also any other known rule. We call it {\em voter replacement rule} until a better name for it is found. 
We study this rule and prove that, similarly to Young's rule, it 
has a computationally hard winner determination problem.
 The second goal is to provide a correct distance rationalizability results  both for Young's rule and Maximin rule filling the existing gaps.  Creating  distances for  these rules appeared to be more involved than one's intuition might initially suggest.

Our paper is organized as follows. In Section~\ref{sec:prelim} we
formally describe our model of elections and the distance
rationalizability framework, tailored to the case of Condorcet
consensus. Then, in Section~\ref{sec:condorcet} we show that Young's
rule and Maximin are both distance rationalizable via distances 
%based on adding/deleting voters, 
which are, in spirit, similar to Hamming distance.
We show that the rule obtained by Hamming distance itself is 
different from Young's rule. %(as was mistakenly claimed in \citeA{mes-nur:b:distance-realizability}) and has not been studied before. 
We prove that  the winner determination problem
for this new rule is computationally hard.

We discuss our
results and present further research directions in
Section~\ref{sec:conclusions}. In the appendix we very briefly
describe fundamental notions of the computational
complexity theory.

\section{Preliminaries}\label{sec:prelim}

An {\em election} $E$ is a triple 
$(C,V, \vecpref)$, where $C = \{c_1, \ldots,
c_m\}$ is a set of {\em candidates}, $V=\{v_1, \ldots, v_n\}$ is a
set of {\em voters}, and $\vecpref=(\pref_1, \dots, \pref_n)$
is a {\em preference profile}, i.e., a vector of {\em preference orders}
of the voters in $V$. For each $i=1, \dots, n$,   
$\pref_i$ is a strict total order over
the candidates in $C$. For readability, we sometimes write $\succ_i$
instead of $o_i$. 
For example, given a candidate set $C = \{c_1,
c_2, c_3\}$, and a voter $v_i$ that likes $c_2$ best, then $c_1$, and then
$c_3$, we write $c_2 \succ_i c_1 \succ_i c_3$.
We remark that it is common to identify the voter set $V$
with the preference profile $(\pref_1, \dots, \pref_n)$.
However, since in this paper we will consider actions 
that modify the set of voters, it will be more convenient
to treat $V$ and $(\pref_1, \dots, \pref_n)$ as two distinct objects. 

A \emph{voting rule} $\R$ (or, more precisely, a social choice
correspondence $\R$) is a function that given an election $E = (C,V, \vecpref)$
outputs a set $\R(E)\subseteq C$ of {\em winners} of the
election. 
Note that we do not require $|\R(E)|=1$. Indeed, there are
cases where, e.g., due to symmetry, it is impossible to declare a
single winner, in which case we may have $\R(E)=\emptyset$ or
$|\R(E)|>1$.  In practice, one may then need to use a {\em draw
  resolution rule}, which can be either deterministic (e.g.,
lexicographic) or randomized (e.g., a fair coin toss); however, in the
rest of this paper we will ignore this issue.  Perhaps the best known
voting rule is the Plurality rule $\R_\plur$, which elects those
candidates who are ranked first by the largest number of voters.

We say that a candidate $c_i$ is a {\em Condorcet winner} in an
election $E = (C,V, \vecpref)$ if for each $c_j \in C$, $c_i \neq c_j$, a strict
majority of voters prefers $c_i$ to $c_j$. While not every election
has a Condorcet winner, the notion is so appealing that many
rules---so-called Condorcet-consistent rules---are designed to select
the Condorcet winner if it exists. For example, Dodgson's rule
selects those candidates who can be made Condorcet winners by the least
number of swaps of adjacent candidates in the preference orders of the
voters.

Intuitively, a preference profile corresponds to a {\em consensus}
among the voters when there exists an alternative that is clearly
better from the collective point of view than any other one.  For
example, one could consider \emph{strongly unanimous} profiles, where
all voters rank candidates identically, or \emph{weakly unanimous}
profiles, where all voters agree on the top-ranked candidate. In
either case it is obvious that the top-ranked candidate is clearly
better than any other one. Throughout this paper, 
we consider a weaker type of consensus, which is inspired 
by the idea that a Condorcet winner, when one exists, 
presents an acceptable compromise between different voters' preferences. 
That is, we say that an election is a {\em consensus election} if it has a
Condorcet winner; we denote the set of all such elections by $\calC$.
For technical reasons, we assume that $\calC$ does not contain
an election with an empty set of voters.

Given a set $X$, we say that a function $d \colon X \times X
\rightarrow \reals \cup \{+\infty\}$ is a {\em distance} (or {\em
  metric}) over $X$ if for each $x,y \in X$ it satisfies the following
four axioms:
\begin{itemize}
\item[(1)] 
 $d(x, y)\ge 0$ (non-negativity),
\item[(2)] 
 $d(x,y) = 0$ if and only if $x = y$ (identity of indiscernibles),
\item[(3)] 
 $d(x,y) = d(y,x)$ (symmetry), and
\item[(4)] 
 for each $z \in X$, $d(x,y) \leq d(x,z) + d(z,y)$ (triangle inequality).
\end{itemize}
In what follows, the elements of the set $X$ will usually be either
voters (i.e., preference orders) or elections.

Any distance $d(\pref, \pref')$ over voters 
with preferences over a candidate set $C$
can be extended to a
distance $\widehat{d}(E^1, E^2)$ over elections $E^1=(C, V, \vecpref^1)$
and $E^2=(C, V, \vecpref^2)$ 
with $\vecpref^1 = (\pref^1_1, \dots, \pref^1_n)$, 
     $\vecpref^2 = (\pref^2_1, \dots, \pref^2_n)$
by setting $\widehat{d}(E^1, E^2)=\sum_{i=1}^nd(\pref^1_i, \pref^2_i)$.
Clearly, $\widehat{d}$
satisfies all distance axioms as long as $d$ does.\footnote{We point
the reader to the work of~\citeA{elk-fal-sli:c:votewise-dr} for an extensive discussion
of distance rationalizability via distances of this type.}

We now provide two
examples of distances defined over pairs of voters with preferences
over a set of candidates $C$.  
Our first example is the {\em discrete
  distance} $d_\discr(\pref, \pref')$, given by $d_\discr(\pref, \pref')=1$ if 
$\pref\neq \pref'$
and $d_\discr(\pref, \pref')=0$ otherwise. Clearly, the corresponding distance
over elections $\widehat{d}_\discr(E^1, E^2)$ is equivalent to the {\em
  Hamming distance} $d_H(E^1, E^2)$, which is defined as $d_H(E^1,
E^2)=|\{i\mid \pref^1_i\neq \pref^2_i\}|$.  Our second example is the {\em Dodgson
  distance}, or {\em swap distance}, $d_\dod(\pref,\pref')$, defined as
$d_\dod(\pref,\pref')=|\{(c_1, c_2)\in C^2\mid c_1\,\pref\, c_2, c_2\,\pref'\, c_1\}|$.
%\end{itemize}
It is not hard to check that both the Dodgson distance and the discrete
distance (and hence the Hamming distance) satisfy the distance axioms
listed above.  (Note that, formally, both of these distances are defined
only for pairs of elections with the same candidate sets and the same
voter sets; if either of these conditions is not met, we assume
that the distance is $\infty$.)  

We are now ready to define distance rationalizability.
The following two definitions are specialized to rationalizability
with respect to Condorcet consensus, but can be adapted
to apply to other consensus classes in a straightforward manner.

\begin{definition}\label{def:score}\label{def:distance-voting}
  Let $d$ be a distance over elections. We define the {\em
    $(\calC,d)$-score} of a candidate $c_i$ in an election $E$ to be
  the distance (according to $d$) between $E$ and a closest election
  $E'$ where $c_i$ is the Condorcet winner.  The set of
  {\em $(\calC,d)$-winners} of an election $E = (C,V, \vecpref)$ consists of those
  candidates in $C$ whose $(\calC,d)$-score is smallest.
\end{definition}

\begin{definition}\label{def:dr}
  A voting rule $\R$ is {\em distance-rationalizable} via Condorcet
  consensus and a distance $d$ over elections, or
  {\em $(\calC,d)$-rationalizable}, if for each election $E$, a candidate
  $c$ is an $\R$-winner of $E$ if and only if she is a
  $(\calC,d)$-winner of $E$.
\end{definition}

For example, Dodgson's rule is $(\calC,\widehat{d}_\dod)$-rationalizable.  This result follows
directly from the definition of Dodgson's rule and witnesses that at
least some voting rules are naturally represented within the distance rationalizability framework.

\section{Main Results}\label{sec:condorcet}
In this section we present our results on voting rules that can be
rationalized with respect to the Condorcet consensus
via Hamming-type distances that correspond to adding, deleting, and
replacing voters.  It is important to have in mind that, when we speak, for example, about deleting voters, 
no voters are actually being deleted. They are just excluded from consideration 
in a search of a maximal subgroup in the electorate that possesses a Condorcet winner.

To begin, observe that, given an election $E=(C, V, \vecpref)$ with $|V|=n$, we
can make any candidate $c\in C$ the Condorcet winner by adding at most
$n+1$ voters that rank $c$ first. Similarly, we can make $c$ the
Condorcet winner by replacing at most $\lfloor n/2\rfloor+1$ voters in
$V$ with voters that rank $c$ first. While not every candidate can be
made the Condorcet winner by voter deletion---for example, if a
candidate is ranked last by all voters, he will not become the Condorcet
winner no matter how many voters we delete---it is still the case
that, if at least one voter ranks a given candidate first, this
candidate can be made the Condorcet winner by removing at most $n-1$
voters.  Thus, for each candidate $c$ we can define her score with respect to
each of these operations as the number of voters that need to be
inserted, replaced, or removed, respectively, to make $c$
the Condorcet winner (for deletion, some candidates will have a score
of $+\infty$).  We will refer to these scores as the {\em insertion
  score}, the {\em replacement score} and the {\em deletion score},
respectively.  Intuitively, for each of these scores, the candidates
with a lower score are closer to being the consensus winners than the
candidates with a higher score, so each of these scores can be used to
define a voting rule.

In fact, there is a well-known voting rule that is defined in these
terms, namely, Young's rule, which elects the candidates with the
lowest deletion score.  Thus, it is natural to ask if the two other
scores defined above, i.e., the replacement score and the insertion
score, also correspond to well-known voting rules. Another interesting
question is whether all three of these scores can be transformed into
distances, i.e., whether the corresponding voting rules are
distance-rationalizable with respect to the Condorcet consensus; 
observe that this issue is more complicated than might
appear at the first sight, since we have to satisfy the symmetry axiom.
Providing answers to these questions is the main contribution
of our paper.

We will first answer the second question by showing how to transform
each of our three scores into a distance. The easiest case is that of
the replacement score.  Formally, given an election $E=(C, V, \vecpref)$, the
{\em replacement score} $s_r(c)$ of a candidate $c\in C$ is the
smallest value of $k$ such that there exists an election $E=(C, V, \vecpref')$
obtained by changing the preferences of exactly $k$ voters in $V$ in
which $c$ is the Condorcet winner; as argued above, $s_r(c)\le \lfloor
n/2\rfloor+1$ for all $c\in C$.  It is immediate that the replacement
score of any $c\in C$ is exactly the Hamming distance from $E$ to the
closest election over the set of candidates $C$ in which $c$ is the
Condorcet winner.  Thus, the corresponding voting rule is $(\calC,
d_H)$-rationalizable. We will refer to this rule as the voter
replacement rule.  We postpone the discussion of whether this rule is
equivalent to any voting rule considered in the literature till the
end of the section.

The {\em insertion score} $s_i(c)$ of a candidate $c\in C$ in an
election $E=(C, V, \vecpref)$ is defined as the smallest number $k\ge 0$ such
that there exists a set of voters $V'$, $|V'|=k$, 
with a preference profile $\vecpref'$ over $C$ such that $c$ is
the Condorcet winner in $E'=(C, V\cup V', \vecpref\circ\vecpref')$, 
where $\vecpref\circ\vecpref'$ denotes the concatenation of the preference
profiles $\vecpref$ and $\vecpref'$.  
Similarly, the {\em
  deletion score} $s_d(c)$ of a candidate $c\in C$ in an election
$E=(C, V, \vecpref)$ is defined as the smallest number $k\ge 0$ such that there
exists a subset of voters $V'\subseteq V$, $|V'|=k$, such that $c$ is
the Condorcet winner in $E'=(C, V\setminus V', \vecpref\setminus\vecpref')$, 
and $+\infty$ if $c$ cannot be made the Condorcet winner in this manner.
Here, $\vecpref\setminus\vecpref'$ denotes the preference profile obtained
from $\vecpref$ by deleting the preference orders of voters in $V'$.

Now, it is easy to see that both the insertion score and the deletion
score naturally correspond to quasidistances, i.e., mappings that
satisfy non-negativity, identity of indiscernibles and the triangle
inequality, but not symmetry.  Indeed, given two elections $E=(C, V, \vecpref)$
and $E=(C, V', \vecpref')$ over the same set of candidates $C$, 
we can define a
function $d'_i(E, E')$ by setting $d'_i(E, E')=k$ if 
$\pref_i=\pref'_i$ for each $v_i\in V\cap V'$, $V\subseteq V'$ and
$|V'\setminus V|=k$, and $d'_i(E, E')=+\infty$
otherwise. 
Similarly, we can define $d'_d(E, E')$ by setting $d'_d(E,
E')=k$ if $\pref_i=\pref'_i$ for each $v_i\in V\cap V'$,
$V'\subseteq V$ and $|V\setminus V'|=k$, and
$d'_d(E, E')=+\infty$ otherwise. It is not hard to verify that both
$d'_i$ and $d'_d$ are quasidistances. Moreover, for each candidate in
$C$ his insertion score $s_i(c)$ is equal to the $d'_i$-distance from
$E$ to the nearest (with respect to $d'_i$) election in $\calC$ in
which $c$ is the Condorcet winner.  Similarly, $c$'s deletion score
$s_d(c)$ is equal to the $d'_d$-distance from $E$ to the nearest (with
respect to $d'_d$) election in $\calC$ in which $c$ is the Condorcet
winner.  We will now show that we can replace both of these
quasidistances with true distances.

For $d'_i$ the solution is simple: we can make $d'_i$ symmetric by
allowing ourselves to delete voters as well as to add voters, as,
intuitively, deleting a voter is never more useful than adding a
voter. Formally, given two elections $E=(C, V, \vecpref)$ and 
$E=(C, V', \vecpref')$ over
the same set of candidates $C$, we set $d_i(E, E')=|V\setminus
V'|+|V'\setminus V|$ if $\pref_i=\pref'_i$ for all $v_i\in V\cap V'$
and $d_i(E, E')=+\infty$ otherwise.
Clearly, $d_i$ is a distance. Moreover,
we will now show that for our purposes it is indistinguishable from
$d'_i$.
\begin{proposition}\label{prop:dr-ins}
  Consider an election $E=(C, V, \vecpref)$, a candidate $c\in C$, and a $k>0$.
  Then there exists an election $E^1=(C, V^1, \vecpref^1)\in\calC$ such that $c$
  is the Condorcet winner of $E^1$ and $d'_i(E, E^1)\le k$ if and only
  if there exists an election $E^2=(C, V^2, \vecpref^2)\in\calC$ such that $c$ is
  the Condorcet winner of $E^2$ and $d_i(E, E^2)\le k$.
\end{proposition}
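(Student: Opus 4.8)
The plan is to prove the two directions of the stated equivalence separately, and the forward direction turns out to be essentially free. Suppose $E^1=(C,V^1,\vecpref^1)\in\calC$ has $c$ as its Condorcet winner with $d'_i(E,E^1)\le k$. By the definition of $d'_i$ this forces $V\subseteq V^1$, agreement of the preference orders on all of $V$, and $|V^1\setminus V|\le k$. But then $V\setminus V^1=\emptyset$, so $d_i(E,E^1)=|V\setminus V^1|+|V^1\setminus V|=|V^1\setminus V|=d'_i(E,E^1)\le k$, and we may simply take $E^2=E^1$.

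The substance is in the reverse direction. Here I am given $E^2=(C,V^2,\vecpref^2)\in\calC$ with $c$ its Condorcet winner and $d_i(E,E^2)\le k$. Writing $A=V^2\setminus V$ for the added voters and $D=V\setminus V^2$ for the deleted voters, this means the preference orders agree on $V\cap V^2$ and $|A|+|D|\le k$. I would build $E^1$ from $E$ by keeping every voter of $V$ intact, adjoining all voters of $A$ with their preferences from $\vecpref^2$, and, crucially, for each deleted voter $v\in D$ adjoining one brand-new voter $v^\ast$ whose preference order is the exact reversal of $v$'s. The resulting $E^1$ uses only additions relative to $E$, so $V\subseteq V^1$, the orders on $V$ are unchanged, and $d'_i(E,E^1)=|A|+|D|\le k$, exactly as needed.

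The key step is to verify that $c$ is still the Condorcet winner of $E^1$, and I would argue this at the level of pairwise margins, where the margin of $a$ over $b$ is the number of voters preferring $a$ to $b$ minus the number preferring $b$ to $a$. For any pair of candidates, a voter $v$ and its reversal $v^\ast$ contribute opposite unit amounts, hence zero net, to every pairwise margin; so retaining $v$ together with $v^\ast$ in $E^1$ has precisely the same effect on all margins as deleting $v$ does in $E^2$. Consequently the margin of $c$ over each rival in $E^1$ coincides with the corresponding margin in $E^2$. Because all preferences are strict, ``strict majority over $b$'' is equivalent to ``positive margin over $b$'' irrespective of the total number of voters, so the sign of every margin---and therefore the identity of the Condorcet winner---is preserved even though $E^1$ has more voters than $E^2$. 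Thus $c$ remains the Condorcet winner of $E^1$, and $E^1\in\calC$, since it is nonempty (it has at least as many voters as the nonempty $E^2$).

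The main obstacle I anticipate is not the construction but making the margin bookkeeping airtight: one must justify comparing $E^1$ and $E^2$ through raw margins rather than through the strict-majority threshold, which shifts as the electorate grows. The observation that under strict preferences a positive margin is equivalent to a strict majority is exactly what licenses this comparison, and once it is in place the equality of margins between $E^1$ and $E^2$ finishes the argument. Everything else reduces to the routine verification of the $d'_i$ and $d_i$ counts carried out above.
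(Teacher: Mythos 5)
Your proof is correct, but the key step takes a genuinely different route from the paper's. Both proofs handle the ``only if'' direction identically (observing $d_i\le d'_i$ and setting $E^2=E^1$), and both build an insertion-only election at $d'_i$-distance exactly $d_i(E,E^2)$ from $E$. The paper, however, neutralizes the $k_1$ deleted voters by adding $k_1$ fresh voters who rank $c$ first, then runs a counting argument: if $x$ voters prefer $c$ to $c'$ and $y$ prefer $c'$ to $c$ in $(C,V\cup V'',\vecpref\circ\vecpref'')$, Condorcet victory in $E^2$ gives $x>y-k_1$, hence $x+k_1>y$ in the new election. You instead cancel each deleted voter $v$ by adding its exact reversal $v^\ast$, so that every pairwise margin of $E^1$ equals the corresponding margin of $E^2$, and you correctly note that under strict total orders ``strict majority'' is equivalent to ``positive margin,'' which makes the comparison across electorates of different sizes legitimate. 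Your reversal device proves something strictly stronger than needed: the entire majority-margin structure of $E^2$ is reproduced in $E^1$, so your argument would transfer unchanged to any consensus notion determined by the signs (or values) of pairwise margins, whereas the paper's construction only guarantees that $c$ itself wins. The paper's version buys brevity---one inequality, no margin formalism---and also directly supports the heuristic it states afterwards (``adding voters is more useful than deleting voters''), since adding $c$-top voters can only help $c$; your version buys exactness and generality at the cost of slightly more bookkeeping, all of which you carry out correctly, including the nonemptiness check for membership in $\calC$.
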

\begin{proof}
  The ``only if'' direction is immediate: if $d'_i(E, E^1)\le k$, then
  $d_i(E, E^1)\le k$, so we can set $E^2=E^1$. For the ``if''
  direction, suppose that $E^2$ has been obtained from $E$ by deleting
  a subset of voters $V'\subseteq V$, $|V'|=k_1$, and adding a set
  of voters $V''$ with a preference profile $\vecpref''$, 
  $|V''|=k_2$. Now, consider an election $E^3$
  obtained from $E$ by first adding the voters in $V''$ and then
  adding another $k_1$ voters that rank $c$ first. Clearly, $d'_i(E,
  E^3)=d_i(E, E^2)\le k$. We will now show that $c$ is the Condorcet
  winner in $E^3$. Indeed, fix an arbitrary voter $c'\in C$.  Suppose
  that in $(C, V\cup V'', \vecpref\circ\vecpref'')$ 
  there are $x$ voters that prefer $c$ to
  $c'$ and $y$ voters that prefer $c'$ to $c$. Then in $E^2$ there are
  at most $x$ voters that prefer $c$ to $c'$ and at least $y-k_1$
  voters that prefer $c'$ to $c$.  Since $c$ is the Condorcet winner
  of $E^2$, we have $x>y-k_1$.  Now, in $E^3$ there are $x+k_1$ voters
  that prefer $c$ to $c'$ and $y$ voters that prefer $c'$ to $c$. As
  we have argued that $x+k_1>y$, it follows that the majority of
  voters in $E^3$ prefer $c$ to $c'$. As this is true for any $c'\neq
  c$, it follows that $c$ is the Condorcet winner in $E^3$.  Moreover,
  $E^3$ has been obtained from $E$ by candidate insertion only, so we
  can set $E^1=E^3$.
\end{proof}
Clearly, we cannot use the same solution for $d'_d$. Indeed, the
argument above demonstrates that adding voters is more useful than
deleting voters. Thus, we need to construct a metric that makes it
expensive to add voters. As this metric has to be symmetric, a natural
approach would be to make the distance between two elections depend on
the number of voters in the larger of them, as well as on the
difference in the number of voters.  For example, 
given two elections $E=(C, V, \vecpref)$ and  $E'=(C, V', \vecpref')$, 
we could try to set
$$
d(E, E')=
\begin{cases}
\left||V|-|V'|\right|+(\max\{|V|, |V'|\})^2 
 		&\text{if $\pref_i=\pref'_i$ for each $i\in V\cap V'$}\\
+\infty &\text{otherwise}.
\end{cases}
$$
%$d((C, V, \vecpref), (C, V', \vecpref'))=\left||V|-|V'|\right|+(\max\{|V|, |V'|\})^2$
%if $\pref_i=\pref'_i$ for each $i\in V\cap V'$ and
%$d((C, V, \vecpref), (C, V', \vecpref'))=+\infty$ otherwise.
However, it turns out that this approach does not quite work: under
this metric, deleting $s_d(c)$ voters may still be more 
expensive than
first deleting some $s'<s_d(c)$ voters and then adding a few voters
that rank $c$ first.  To overcome this difficutly, we construct a
metric that makes it prohibitively difficult to do insertion and
deletion at the same time.

Formally, for any pair of elections $E=(C, V, \vecpref)$, 
$E'=(C, V', \vecpref')$ over the same set of candidates $C$
such that $\pref_i=\pref'_i$ for each $i\in V\cap V'$, 
we set $k = \left||V|-|V'|\right|$, $M = \max\{|V|, |V'|\}$, and let
$$
\overline{d}_d(E, E')=
\begin{cases}
0 				&\text{if }V=V'\\
2- \frac{1}{k+M^2+1} 	&\text{if }V\subset V'\text{ or }V'\subset V\\
+\infty &\text{otherwise}.
\end{cases}
$$
Also, we set $\overline{d}_d(E, E') =+\infty$ if
$\pref_i\neq\pref'_i$ for some $i\in V\cap V'$.
The function $\overline{d}_d(E, E')$ is not a metric, as it does not satisfy
the triangle inequality. However, we can use it to construct a metric
$d_d$ by setting 
$
d_d(E, E')=\min
\{\overline{d}_d(E, E_1)+\overline{d}_d(E_1, E_2)+\dots+\overline{d}_d(E_\ell, E')
\mid \ell\in{\mathbb N}, 
E_1, \dots, E_\ell\in{\calE}_C
\}, 
$ 
where $\calE_C$ denotes the set of all elections with the set of
candidates $C$.  Intuitively, $d_d(E, E')$ is the shortest path
distance in the graph whose vertices are elections in $\calE_C$, and
the edge lengths are given by $\overline{d}_d$. It is well known that
for any graph with non-negative edge lengths the shortest path
distance satisfies the triangle inequality; it should be clear that
$d_d$ satisfies all other axioms of a metric as well.  Observe that
for any two elections $E, E'\in{\calE}_C$ such that $E=(C, V, \vecpref)$,
$E'=(C, V', \vecpref')$ and $\pref_i=\pref'_i$ for each $i\in V\cap V'$, 
we have $d_d(E, E')< 2$ if $V\subseteq V'$ or
$V'\subseteq V$ and $d_d(E, E')>2$ otherwise.

We will now show that $d_d$ can be used to rationalize Young's
rule with respect to the Condorcet consensus.

\begin{proposition}\label{prop:dr-del}
  Consider an election $E=(C, V, \vecpref)$, $|V|=n$, and two candidates $c_1,
  c_2\in C$ such that $s_d(c_1)<+\infty$ or $s_d(c_2)<+\infty$.  For
  $i=1, 2$, let $d_i$ be the $d_d$-distance from $c_i$ to the closest
  (with respect to $d_d$) election over $C$ in which $c_i$ is a
  Condorcet winner, Then $s_d(c_1)<s_c(c_2)$ if and only if $d_1<
  d_2$.
\end{proposition}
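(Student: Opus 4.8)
The plan is to reduce the claimed equivalence to an explicit computation of the two scores $d_1$ and $d_2$. Concretely, I will argue that whenever $s_d(c_i)<+\infty$ the closest election in $\calC$ making $c_i$ a Condorcet winner is obtained from $E$ by \emph{pure deletion} of exactly $s_d(c_i)$ voters, so that $d_i = 2-\frac{1}{s_d(c_i)+n^2+1}$ (with the convention $d_i=0$ when $s_d(c_i)=0$), while when $s_d(c_i)=+\infty$ the closest such election is obtained by \emph{pure insertion}. Since $t\mapsto 2-\frac{1}{t+n^2+1}$ is strictly increasing, the equivalence will then follow by comparing these closed forms.

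First I would pin down the pure-deletion distance. If $E^*$ is obtained from $E$ by deleting $t$ voters, then $E^*\subset E$, the single edge $\overline{d}_d(E,E^*)=2-\frac{1}{t+n^2+1}$ is available, and I claim it is already the shortest path: every nontrivial $\overline{d}_d$-edge has length at least $5/3$ (the minimum of $2-\frac{1}{k+M^2+1}$ over $k,M\ge 1$), so any path of length $\ge 2$ costs more than $2$, whereas the direct edge costs less than $2$. Hence $d_d(E,E^*)=2-\frac{1}{t+n^2+1}$, which is increasing in $t$ and therefore minimized, over all deletions making $c_i$ a Condorcet winner, at $t=s_d(c_i)$.

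Next I would show that no other type of target election does better. Any $E^*$ that is neither a sub- nor a super-election of $E$ (in particular, any $E^*$ that alters the preferences of a retained voter) satisfies $d_d(E,E^*)>2$ by the dichotomy established just before the proposition, so it is dominated by the pure-deletion option, whose distance is below $2$. For a pure insertion of $s\ge 1$ voters the same single-edge argument gives distance $2-\frac{1}{s+(n+s)^2+1}$; this strictly exceeds the best deletion distance $2-\frac{1}{s_d(c_i)+n^2+1}$ precisely because $s+(n+s)^2-n^2=s(1+2n+s)\ge 2n+2>n>s_d(c_i)$, using $s_d(c_i)\le n-1$ (a Condorcet winner requires a nonempty electorate). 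Thus, when $s_d(c_i)<+\infty$, deletion strictly beats both insertion and mixed moves, confirming the claimed closed form for $d_i$; and when $s_d(c_i)=+\infty$, deletion is impossible and the closest election is the pure-insertion one, whose distance still exceeds every finite deletion distance by the same inequality.

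Finally I would assemble the equivalence by cases. If both scores are finite, $d_i=2-\frac{1}{s_d(c_i)+n^2+1}$ (or $0$) is a strictly increasing function of $s_d(c_i)$, so $s_d(c_1)<s_d(c_2)$ iff $d_1<d_2$. If exactly one score is finite (the hypothesis rules out both being infinite), say $s_d(c_1)<+\infty=s_d(c_2)$, then $s_d(c_1)<s_d(c_2)$ holds, and the insertion-versus-deletion inequality above gives $d_1<d_2$, matching; the symmetric subcase is identical. I expect the main obstacle to be the two optimality arguments of the third paragraph --- verifying that a single $\overline{d}_d$-edge realizes the shortest-path distance, and the arithmetic showing insertion (and hence any mixed strategy) can never undercut deletion --- rather than the concluding monotonicity argument, which is routine once the closed forms are in hand.
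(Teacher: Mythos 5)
Your proof is correct and takes essentially the same route as the paper's: you identify the pure-deletion election as the closest consensus election by ruling out incomparable targets via the $d_d>2$ dichotomy and pure insertions via the arithmetic comparison $s+(n+s)^2+1>s_d(c_i)+n^2+1$ (the paper uses the equivalent $s=1$ bound $2-\frac{1}{2+(n+1)^2}>2-\frac{1}{n-1+n^2+1}$), and then conclude by strict monotonicity of $t\mapsto 2-\frac{1}{t+n^2+1}$, handling the $s_d(c_i)=+\infty$ case by the same insertion lower bound. Your explicit $5/3$-per-edge justification of the single-edge/shortest-path dichotomy and your $s_d(c_i)=0$ convention are small completeness refinements of steps the paper merely asserts, not a different argument.
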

\begin{proof} 
  Suppose first that $s_d(c_1)=k_1<+\infty$, $s_d(c_2)=k_2<+\infty$.
  Then one can obtain an election over $C$ in which $c_1$
  (respectively, $c_2$) is the Condorcet winner by deleting $k_1$
  (respectively $k_2$) voters from $E$; denote this election by $E_1$
  (respectively, $E_2$).  We have $d_d(E, E_1)=2-\frac{1}{k_1+n^2+1}$,
  $d_d(E, E_2)=2-\frac{1}{k_2+n^2+1}$.  We claim that $d_1 = d_d(E,
  E_1)$.  Indeed, suppose that this is not the case, i.e., $d_d(E,
  E_1) > d_1$.  This means that there exists an election $E'=(C, V', \vecpref')$
  such that $c_1$ is the Condorcet winner of $E'$ and $d_d(E, E')<
  d_d(E, E_1)$.  As $E'$ cannot be obtained from $E$ by deleting
  voters, it holds that $V'\not\subseteq V$.  Now, if also
  $V\not\subseteq V'$, we immediately obtain $d_d(E, E')>2$, a
  contradiction with $d_d(E, E_1)<2$. Hence, it must be the case that
  $V\subset V'$, so $|V'|\ge n+1$, and we have $d_d(E, E')\ge
  2-\frac{1}{2+(n+1)^2}$. On the other hand, we have $k_1\le n-1$,
  which implies $d_d(E, E_1)\le 2-\frac{1}{n-1+n^2+1}$.  As
  $2-\frac{1}{2+(n+1)^2}> 2-\frac{1}{n-1+n^2+1}$, this gives a
  contradiction as well.  Similarly, we can show that $d_2 = d_d(E,
  E_2)$.  Hence, it follows that $k_1 < k_2$ if and only if $d_1 <
  d_2$.

  Now suppose that $s_d(c_1) < +\infty$, $s_d(c_2) = +\infty$ (the
  case $s_d(c_1) = +\infty$, $s_d(c_2) < +\infty$ is symmetric).  Then
  we have $d_1\le 2-\frac{1}{n+n^2}$, $d_2\ge 2-\frac{1}{2+(n+1)^2}$,
  since we cannot trasform $E$ into an election over $C$ in which
  $c_2$ is the Condorcet winner by candidate deletion only. Thus, in
  this case, too, $s_d(c_1) <s_d(c_2)$ if and only if $d_1 < d_2$.
\end{proof}
Since in any election there is at least one candidate $c$ with
$s_d(c)<+\infty$, Proposition~\ref{prop:dr-del} immediately implies
the following result.
\begin{theorem}
  Young's rule is $(\calC, d_d)$-rationalizable.
\end{theorem}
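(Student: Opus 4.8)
The plan is to derive the theorem directly from Proposition~\ref{prop:dr-del}, which does almost all of the work. Recall that Young's rule elects exactly those candidates with the smallest deletion score $s_d$, and that the $(\calC, d_d)$-winners are by Definition~\ref{def:score} those candidates $c_i$ whose $(\calC, d_d)$-score $d_i$ (the $d_d$-distance from $E$ to the closest election in $\calC$ in which $c_i$ is the Condorcet winner) is smallest. So the entire claim reduces to showing that the ordering of candidates by $s_d$ coincides with the ordering by $d_i$, and in particular that the two rules pick out the same set of minimizers.

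First I would observe that Proposition~\ref{prop:dr-del} establishes, for every pair of candidates $c_1, c_2$ with $s_d(c_1) < +\infty$ or $s_d(c_2) < +\infty$, the equivalence $s_d(c_1) < s_d(c_2) \iff d_1 < d_2$. This is precisely an order-isomorphism statement on the relevant candidates, and it is symmetric in the two candidates, so by applying it in both directions we also get $s_d(c_1) = s_d(c_2) \iff d_1 = d_2$ (neither being strictly smaller forces equality on both sides). Hence $s_d$ and the $(\calC, d_d)$-score induce the same weak order on $C$ whenever at least one of the candidates compared has finite deletion score.

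Next I would handle the existence guarantee that lets us invoke the proposition. As the excerpt notes just before the theorem statement, in any election there is at least one candidate $c$ with $s_d(c) < +\infty$: for instance, any candidate ranked first by some voter can be made the Condorcet winner by deleting all other voters. Let $c^\ast$ be such a candidate. Then for any other candidate $c_j$, the pair $(c^\ast, c_j)$ satisfies the hypothesis of Proposition~\ref{prop:dr-del}, so the comparison between $c^\ast$ and $c_j$ is preserved. Since the minimizer of $s_d$ is certainly a candidate with finite deletion score (its score is at most $s_d(c^\ast) < +\infty$), every comparison needed to identify the set of $s_d$-minimizers involves at least one finite-score candidate, and the proposition applies to each such comparison.

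Finally I would conclude: a candidate $c$ minimizes $s_d$ over $C$ if and only if $s_d(c) \le s_d(c_j)$ for all $c_j$, which by the order-equivalence just established holds if and only if $d(c) \le d(c_j)$ for all $c_j$, i.e. if and only if $c$ minimizes the $(\calC, d_d)$-score. Thus the Young-winners and the $(\calC, d_d)$-winners coincide for every election $E$, which is exactly the condition in Definition~\ref{def:dr} for $(\calC, d_d)$-rationalizability. I do not anticipate a serious obstacle here, since the metric construction and the delicate distance estimates have already been discharged inside Proposition~\ref{prop:dr-del}; the only point requiring a moment's care is making sure that the ``at least one finite score'' hypothesis of the proposition is always met when comparing against a potential winner, which the existence of $c^\ast$ guarantees.
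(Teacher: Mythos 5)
Your proposal is correct and follows exactly the paper's route: the paper derives the theorem in one line from Proposition~\ref{prop:dr-del} together with the observation that every election has at least one candidate with finite deletion score, and your write-up simply makes explicit the details the paper leaves implicit (the equality case via applying the strict-inequality equivalence symmetrically, and the check that each comparison against a potential winner involves a finite-score candidate). These details are filled in correctly, so this is a faithful, slightly more careful rendering of the paper's own argument.
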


We now turn to the first of the two questions posed in the beginning
of this section.  We have observed that the voter deletion-based rule
is equivalent to Young's rule; the proof follows immediately from the
definitions of both rules.  We will now show that the voter
insertion-based rule is equivalent to another well-known rule, namely,
Maximin. Under Maximin, the score of each voter is the outcome of his
worst pairwise election. Formally, given an election $E=(C, V, \vecpref)$, 
for each $c_j\in C$ we set $s_M(c_j)=\min\{\#\{i: c_j\succ_i c_k\}\mid
c_k\in C\}$.  The winners are then the candidates $c$
with the highest Maximin score $s_M(c)$.
\begin{proposition}\label{prop:maximin}
  For any election $E=(C, V, \vecpref)$, $|V|=n$, and any candidate $c\in C$ we
  have $s_i(c) = n -2s_M(c)+1$, where $s_i(c)$ is the insertion score
  of $c$ and $s_M(c)$ is the Maximin score of $c$.
\end{proposition}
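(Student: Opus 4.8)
The plan is to pin down the optimal way of performing insertions and then to translate the Condorcet condition into a single integer inequality governed by the Maximin-minimizing opponent. For a candidate $c$ and an opponent $c_k \neq c$, write $N(c,c_k)=\#\{i : c \succ_i c_k\}$ for the number of voters of $E$ who prefer $c$ to $c_k$, so that $s_M(c)=\min_{c_k\neq c}N(c,c_k)$. The first observation is that if we add $k$ new voters, then the number of voters preferring $c$ to any fixed $c_k$ grows by at most $k$, with equality exactly when all added voters rank $c$ above $c_k$; in particular, adding $k$ voters who all rank $c$ first increases every such count by the full $k$ simultaneously. This cleanly separates the problem into a construction (an upper bound on $s_i(c)$) and an optimality argument (a matching lower bound).

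For the upper bound I would take the $k$ added voters to all rank $c$ first. After this insertion the electorate has $n+k$ voters, and for every $c_k\neq c$ exactly $N(c,c_k)+k$ of them prefer $c$ to $c_k$. Thus $c$ is the Condorcet winner precisely when $N(c,c_k)+k > (n+k)/2$ for all $c_k$, which rearranges to $k > n - 2N(c,c_k)$. The binding constraint is the opponent minimizing $N(c,c_k)$, i.e. the one realizing $s_M(c)$, so the whole requirement collapses to $k > n - 2s_M(c)$. Taking the least nonnegative integer $k$ satisfying this yields $k = n - 2s_M(c)+1$ whenever this quantity is nonnegative, and substituting back (using $N(c,c_k)\ge s_M(c)$ together with integrality of $2N(c,c_k)+k$ versus $n$) confirms that strict majority holds against every opponent.

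For the lower bound I would argue directly, assuming nothing about the preferences of the added voters. Fix an opponent $c_k$ with $N(c,c_k)=s_M(c)$. If we add any $k'$ voters, at most $k'$ of them can prefer $c$ to $c_k$, so at most $s_M(c)+k'$ of the $n+k'$ resulting voters prefer $c$ to $c_k$. For $c$ to beat $c_k$ we therefore need $s_M(c)+k' > (n+k')/2$, i.e. $k' > n-2s_M(c)$, forcing $k'\ge n-2s_M(c)+1$. Combined with the construction, this gives $s_i(c)=n-2s_M(c)+1$.

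The one delicate point, which I would flag explicitly, is the boundary behaviour of the rounding. The displayed formula is exact as long as $n-2s_M(c)+1\ge 0$, and this always holds when $c$ is not already a Condorcet winner of $E$: in that case some $N(c,c_k)$ fails the strict-majority threshold, whence $s_M(c)\le n/2$ and $n-2s_M(c)+1\ge 1$. When $c$ is already a Condorcet winner we have $s_i(c)=0$, which the formula reproduces only in the extremal odd-$n$ case $s_M(c)=(n+1)/2$; the fully precise statement is $s_i(c)=\max\{0,\,n-2s_M(c)+1\}$, and since the Condorcet winner is unique the truncation affects at most one candidate and does not disturb the ordering of the scores. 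I expect this careful handling of the majority rounding and of the already-winning case to be the main (and essentially only) subtlety; everything else reduces to the elementary integer inequality above.
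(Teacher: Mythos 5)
Your proof is correct and follows essentially the same route as the paper's: an upper bound obtained by adding voters who rank $c$ first (so every pairwise count in $c$'s favour increases by the full amount simultaneously), and a matching lower bound driven by a worst pairwise opponent $c_k$ with $N(c,c_k)=s_M(c)$, both reducing to the integer inequality $k>n-2s_M(c)$. Your boundary remark is a genuine, if minor, catch that the paper's proof glosses over: when $c$ is already a Condorcet winner with margin (e.g.\ $s_M(c)>(n+1)/2$) the stated formula goes negative while by definition $s_i(c)=0$, so the precise identity is $s_i(c)=\max\{0,\,n-2s_M(c)+1\}$; as you correctly observe, this truncation affects only the unique Condorcet winner, who minimizes $s_i$ and maximizes $s_M$ in any case, so the induced rule and hence the $(\calC,d_i)$-rationalizability of Maximin are unaffected.
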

\begin{proof}
  Fix an election $E=(C, V, \vecpref)$, $|V|=n$, and a candidate $c_j\in C$.
  Set $t=s_M(c_j)$.  Let $c_k$ be one of $c_j$'s worst pairwise
  opponents, i.e., $|\{q: c_j\succ_q c_k\}|=t$. Now, if we add $n
  -2t+1$ voters that rank $c_j$ first, for any $c_\ell\neq c_j$ there
  are at most $n-t$ voters that rank $c_\ell$ above $c_j$ and at least
  $t+n-2t+1=n-t+1$ voters that rank $c_j$ above $c_\ell$, so $c_j$ is
  the Condorcet winner of the resulting election.  On the other hand,
  if we add at most $n-2t$ new voters to $E$, in the resulting
  election there will be at least $n-t$ voters that prefer $c_k$ to
  $c_j$ and at most $t+n-2t=n-t$ voters that prefer $c_j$ to $c_k$, so
  in this case $c_k$ prevents $c_j$ from becoming the Condorcet
  winner.
\end{proof}
Thus, the candidates with the highest Maximin score are exactly the
candidates with the lowest insertion score. Together with
Proposition~\ref{prop:dr-ins}, this implies the following result.
\begin{theorem}
  Maximin is $(\calC, d_i)$-rationalizable.
\end{theorem}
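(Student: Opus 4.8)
The plan is to combine the two results that have just been established: Proposition~\ref{prop:maximin}, which relates the insertion score $s_i$ to the Maximin score $s_M$ via the affine relation $s_i(c) = n - 2s_M(c) + 1$, and Proposition~\ref{prop:dr-ins}, which shows that the quasidistance $d'_i$ and the genuine metric $d_i$ induce identical comparisons between candidates' distances to their closest Condorcet-consensus elections. The goal is to conclude that Maximin is $(\calC, d_i)$-rationalizable, i.e., that the Maximin winners of $E$ coincide exactly with the $(\calC, d_i)$-winners of $E$.

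First I would observe that, by Definition~\ref{def:score}, the $(\calC, d_i)$-score of a candidate $c$ is the $d_i$-distance from $E$ to the closest election in which $c$ is the Condorcet winner. I would then argue that this $(\calC, d_i)$-score equals the insertion score $s_i(c)$. The inequality $(\calC, d_i)\text{-score}(c) \le s_i(c)$ is immediate, since an optimal insertion of $s_i(c)$ voters yields an election $E^1\in\calC$ in which $c$ wins and with $d_i(E, E^1) = s_i(c)$ (here $V\subseteq V^1$, so the $|V^1\setminus V|$ term alone contributes). For the reverse inequality, I would invoke Proposition~\ref{prop:dr-ins}: any election $E^2\in\calC$ in which $c$ is the Condorcet winner with $d_i(E, E^2)\le k$ guarantees, via the proposition's ``if'' direction, an election reachable from $E$ by insertion only with $d'_i$-distance at most $k$, hence $s_i(c)\le k$. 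Taking $k$ to be the $(\calC, d_i)$-score yields $s_i(c)\le (\calC, d_i)\text{-score}(c)$, so the two are equal.

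Next I would translate the comparison of scores into a comparison of Maximin scores. Since $s_i(c) = n - 2s_M(c) + 1$ is a strictly decreasing affine function of $s_M(c)$ for a fixed election $E$ (where $n$ is constant across all candidates), minimizing $s_i(c)$ over $c\in C$ is equivalent to maximizing $s_M(c)$. Therefore the set of candidates with the smallest $(\calC, d_i)$-score is precisely the set of candidates with the largest Maximin score, which is exactly the set of Maximin winners. This establishes the ``if and only if'' characterization required by Definition~\ref{def:dr}.

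I expect the only delicate point to be the rigorous justification that the $(\calC, d_i)$-score equals $s_i(c)$ rather than merely tracking it up to a monotone transformation; this is where Proposition~\ref{prop:dr-ins} does the real work, ruling out the possibility that mixing deletions and insertions under the symmetric metric $d_i$ could produce a strictly cheaper consensus election than pure insertion. Once that equality (or even just the order-equivalence between $(\calC, d_i)$-scores and insertion scores) is in hand, the remainder is the straightforward affine-monotonicity argument of the previous paragraph, and the theorem follows directly from Definition~\ref{def:dr}.
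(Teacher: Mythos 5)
Your proposal is correct and follows exactly the paper's route: the paper also derives this theorem by chaining Proposition~\ref{prop:maximin} (the affine relation $s_i(c)=n-2s_M(c)+1$, so Maximin winners are precisely the candidates with minimal insertion score) with Proposition~\ref{prop:dr-ins} (so the $(\calC,d_i)$-score coincides with $s_i(c)$, i.e., mixing deletions with insertions never helps). You have merely written out explicitly the two-inequality argument and the edge case $k=0$ that the paper leaves implicit in its one-line deduction.
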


The situation with the voter replacement rule is more complicated.
\citeA{mes-nur:b:distance-realizability} claim that Young's rule
is $(\calC, d_H)$-rationalizable.  As we have argued that the voter
replacement rule is $(\calC, d_H)$-rationalizable, this would imply
that the voter replacement rule is equivalent to Young's rule, or, in
other words, deleting voters is equivalent to replacing voters.
However, it turns out that this is not true.

\begin{theorem}\label{thm:replacement-versus-young}
  There exists an election in which the voter replacement rule and
  Young's rule declare different candidates as winners.
\end{theorem}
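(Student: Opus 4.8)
The plan is to exhibit a single explicit election $E=(C,V,\vecpref)$ and compute both the replacement score $s_r(c)$ and the deletion score $s_d(c)$ for every candidate, then observe that the candidate minimizing $s_r$ differs from the candidate minimizing $s_d$. Since Young's rule elects the candidate with the lowest deletion score, and the voter replacement rule elects the candidate with the lowest replacement score (by the $(\calC,d_H)$-rationalizability established earlier in this section), a single example with disagreeing minimizers suffices.

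The key design intuition is exactly the asymmetry flagged in the proof of Proposition~\ref{prop:dr-ins}: replacing a voter is strictly more powerful than deleting one, because a replacement simultaneously removes a vote hostile to $c$ \emph{and} inserts a vote favorable to $c$, shifting a pairwise margin by two rather than by one. So I would look for a candidate, say $c_1$, who is close to being a Condorcet winner along its worst pairwise comparison but sits in an electorate where deletion is hampered (deleting a voter who disfavors $c_1$ in one comparison may simultaneously worsen $c_1$'s standing in another comparison, so the cheap swaps available to replacement are unavailable to deletion). Against $c_1$ I would place a rival candidate $c_2$ that is comparatively cheap to promote by pure deletion but expensive to promote by replacement. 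Concretely I expect a small instance with $m=3$ or $m=4$ candidates and on the order of a handful of voters to suffice; one sets up a near-cyclic majority pattern so that every candidate fails to be a Condorcet winner, then tunes the multiplicities of a few preference orders so that the two scores' minimizers split.

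I would then carry out the verification in two bookkeeping steps. First, for the replacement minimizer: exhibit the explicit set of $s_r(c_1)$ voters whose replacement makes $c_1$ the Condorcet winner (giving an upper bound on $s_r(c_1)$), and argue a matching lower bound by checking that along $c_1$'s tightest pairwise contest the deficit cannot be overcome with fewer replacements, since each replacement improves that margin by at most two; then check that $c_2$ needs strictly more replacements. Second, for the deletion minimizer: exhibit the explicit voters whose deletion makes $c_2$ the Condorcet winner and show by the analogous margin-counting (each deletion improves a margin by at most one, and may hurt another comparison) that $c_1$ requires strictly more deletions. The conclusion is then that $c_1$ is the unique voter-replacement winner while $c_2$ is the unique Young winner.

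The main obstacle is the lower-bound side of the bookkeeping rather than the upper bounds: upper bounds are witnessed by explicitly displayed modified elections, but the lower bounds require arguing that \emph{no} cheaper set of operations works. For the replacement score this is relatively clean via the two-per-swap margin argument, but for the deletion score one must rule out all small deletion sets for the losing candidate, keeping track of the fact that removing a voter to fix one pairwise comparison can break another; this is exactly the interaction that distinguishes the two rules, so it is both the crux of the separation and the place where the argument could most easily be botched. I would therefore choose the instance so that these margin computations are forced and transparent, with the ties between candidates arranged so that the disagreement between the two minimizers is unambiguous.
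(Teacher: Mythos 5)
Your plan coincides with the paper's actual strategy --- a single explicit election with a near-cyclic majority pattern, upper bounds on scores witnessed by exhibited modifications, lower bounds by margin counting (each replacement shifts a pairwise margin by at most two; each deletion shifts it by at most one and may simultaneously damage another pairwise contest) --- and your diagnosis of the replacement/deletion asymmetry is precisely the design principle behind the paper's instance. The problem is that the theorem is an existence claim and you never produce the witness: ``I expect a small instance with $m=3$ or $m=4$ candidates and on the order of a handful of voters to suffice'' is a research plan, not a construction. Worse, the step you yourself identify as the crux --- ruling out \emph{every} small deletion set for the replacement winner --- is exactly the part that carries the weight of the paper's proof and is left entirely undone. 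In the paper's instance ($C=\{a,b,c,d\}$, $29$ voters: $5$ voters ranking $a$ first plus three $8$-voter blocs cycling $b$, $c$, $d$ with $a$ ranked third in each), showing $s_d(a)\ge 12$ requires a genuine case analysis on how many voters of each bloc a candidate deletion set can contain (at least $4$ of one type, exactly $3$, at most $2$ of each), because deleting a $b$-voter improves $a$'s standing against $b$ and $c$ but leaves the deficit against $d$ intact while shrinking the majority threshold only slowly. The separating computation is then $s_r(a)\le 3$ (replace one voter from each bloc) versus $s_r(b)=4$, against $s_d(b)=8$ (delete all $c$-voters) versus $s_d(a)\ge 12$.

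Your size estimate also suggests the tuning is harder than you anticipate: the margins must be large enough that the bloc structure \emph{forces} the lower bounds, which in the paper takes three blocs of $8$ voters on top of the $5$ $a$-first voters, not ``a handful''; and it is not established that $m=3$ can work at all (the paper uses four candidates, with the fourth needed so that the replacement winner loses all three of its pairwise contests by a small, replacement-cheap but deletion-expensive margin). So while you have correctly isolated the mechanism that separates the two rules, the proposal as written does not prove the theorem: without an explicit election and the completed lower-bound bookkeeping, nothing existential has been demonstrated.
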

\begin{proof}
  We construct an election $E=(C, V, \vecpref)$ with $C=\{a, b, c, d\}$ and
  $|V|=29$.  Among the first $5$ voters in $V$, there are $2$ voters
  with preference order $a\succ b \succ c\succ d$, $2$ voters with
  preference order $a\succ c \succ d\succ b$, and $1$ voter with
  preference order $a\succ b \succ d\succ c$.
 
  Further, there are $8$ voters with preferences $b \succ c \succ
  a\succ d$ ($b$-voters), $8$ voters with preferences $c \succ d \succ
  a\succ b$ ($c$-voters), and $8$ voters with preferences $d \succ b
  \succ a\succ c$ ($d$-voters).

  We summarize the numbers of voters that prefer $x$ to $y$ for $x,
  y\in\{a, b, c, d\}$ in the table below; we write $x>y:t$ to denote
  the fact that there are $t$ voters that prefer $x$ to $y$.
\begin{align*}
a>b: 13,\quad &b>a: 16,\quad &b>c: 19,\quad &c>b: 10\\
a>c: 13,\quad &c>a: 16,\quad &b>d: 11,\quad &d>b: 18\\
a>d: 13,\quad &d>a: 16,\quad &c>d: 20,\quad &d>c: 9
% a>b: 13,\qquad &b>a: 16\\
% a>c: 13,\qquad &c>a: 16\\
% a>d: 13,\qquad &d>a: 16\\
% b>c: 19,\qquad &c>b: 10\\
% b>d: 11,\qquad &d>b: 18\\
% c>d: 20,\qquad &d>c: 9
\end{align*}
Let us now compute $s_r(x)$ and $s_d(x)$ for $x\in\{b, c, d\}$.
Candidate $b$ wins pairwise elections against $a$ and $c$, but loses
to $d$ by $7$ votes. Hence, $s_d(b)\ge 8$.  On the other hand,
deleting $8$ votes is sufficient: indeed, deleting all $c$-voters
makes $b$ the Condorcet winner.  Thus, $s_d(b)=8$. For the same
reason, we need to replace at least $4$ voters to make $b$ the
Condorcet winner (each replacement reduces $d$'s margin of victory
over $b$ by at most $2$), and, indeed, replacing $4$ of the $c$-voters
with $4$ voters that rank $b$ first makes $b$ the Condorcet
winner. Hence, $s_r(b)=4$.  Similarly, $c$ loses the pairwise
election to $b$ by 9 votes, so we have $s_d(c)\ge 10$, $s_r(c)\ge 5$
(we can show that, in fact, $s_d(c)=10$ and $s_r(c)=5$, but this is
not needed for our proof), and $d$ loses the pairwise election to $c$
by 11 votes, so we have $s_d(d)\ge 12$, $s_r(d)\ge 6$.

Now, it is not hard to see that $s_r(a)\le 3$: after we replace one
$b$-voter, one $c$-voter and one $d$-voter with voters that rank $a$
first, for each $x=b, c, d$ we have $15$ voters that prefer $a$ to $x$
and $14$ voters that prefer $x$ to $a$.  Thus, we have $s_r(a)<s_r(x)$
for $x=b, c, d$. To complete the proof, we will now argue that
$s_d(a)>s_d(b)$. Specifically, we will show that $s_d(a)\ge 12$.

Indeed, it is clear that to make $a$ the Condorcet winner, it is never
optimal to delete any of the first five voters.  Now, suppose that we
can make $a$ the Condorcet winner by deleting a set $S$ of voters,
$|S|<12$.  Suppose first that $S$ contains at least $4$ voters of a
particular type (i.e., $b$-voters, $c$-voters, or $d$-voters); without
loss of generality, we can assume that $S$ contains $4$ $b$-voters.
After these voters have been deleted, $a$ loses to $d$ by $7$
votes, so we need to delete at least $8$ more voters, i.e., at least
$12$ voters altogether, a contradiction. Hence, we can now assume that
$S$ contains at most $3$ voters of each type. Next, suppose that $S$
contains exactly $3$ voters of some type; again, without loss of
generality we can assume that those are $b$-voters. After these voters
have been deleted, $a$ loses to $d$ by $6$ votes, so we have to
additionally delete at least $7$ other voters, i.e., at least $4$
voters of some other type, a contradiction.  Hence, $S$ contains at
most $2$ voters of each type. Now, consider an arbitrary voter in $S$;
without loss of generality we can assume that this is a $b$-voter.
After this voter has been deleted, $a$ loses to $d$ by $4$ votes, so
we need to additionally delete at least $5$ other voters, i.e., at
least $3$ voters of some other type, a contradiction. We conclude that
$s_d(a)\ge 12$.
\end{proof}  
In fact, the voter replacement rule, despite having a very natural
definition in terms of distances and consensuses, appears not to be
equivalent to any known voting rule. The only brief reference to this
rule that we could find in the literature is due to \citeA{fal-hem-hem:j:bribery}, where
the authors interpret this rule as a variant of the Dodgson rule, 
and show (\citeA{fal-hem-hem:j:bribery}, Theorem 5.7) that the problem of finding the
winners for the voter replacement rule is tractable (i.e., solvable in
polynomial time) under the assumption that the number of candidates is
fixed (in fact, their proof establishes something slightly stronger,
namely that the problem is fixed parameter tractable;
see~\cite{nie:b:invitation-fpt,dow-fel:b:parameterized} for
introduction to parameterized complexity). In contrast, we will now show
that without this assumption determining winners under the voter
replacement rule is computationally hard (this is also the case
for Young's rule; see the work of~\citeA{rot-spa-vog:j:young}).
\begin{theorem}\label{thm:nphard}
  Given an election $E=(C, V, \vecpref)$ and a candidate $p\in C$, it is $\np$-hard
  to decide if $p$ is a winner of $E$ under the voter replacement
  rule.
\end{theorem}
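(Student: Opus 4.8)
The plan is to reduce from an $\np$-complete covering problem (I would use Exact Cover by 3-Sets, or equivalently Set Cover / Vertex Cover), after first re-expressing the replacement score $s_r(p)$ as the optimum of a covering problem. The key structural observation is that lowering $p$'s replacement score is never helped by anything other than replacing voters with voters who rank $p$ first: whether $p$ is the Condorcet winner depends only on $p$'s pairwise contests, and a $p$-first replacement weakly improves every one of them, so an optimal replacement set $W$ may be assumed to consist of voters each replaced by a voter ranking $p$ first. For such a set, replacing a voter $v$ improves $p$'s margin against a candidate $c$ by $2$ exactly when $v$ ranks $c$ above $p$, and by $0$ otherwise. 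Hence $s_r(p)$ equals the minimum size of a set $W$ of voters such that, for every candidate $c$ that currently ties or beats $p$, at least $r_c$ voters of $W$ rank $c$ above $p$, where $r_c$ is the number of flips needed to overturn $c$'s pairwise advantage. In short, $s_r(p)$ is the value of a (multi)covering problem in which each voter ``covers'' the candidates it ranks above $p$.

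First I would engineer the reduction so that all thresholds $r_c$ equal $1$. Given a covering instance with universe $U$ and sets $S_1,\dots,S_m$, I introduce $p$, one candidate $c_u$ per element $u\in U$, and enough padding voters so that $p$ loses to each $c_u$ by exactly one vote and the total number of voters is odd. For each $S_j$ I add a \emph{set-voter} ranking the candidates $\{c_u : u\in S_j\}$ above $p$ and ranking $p$ above everything else, while the padding voters are arranged to rank $p$ above every $c_u$, so that flipping a padding voter covers no element-candidate. With this design a single flip of a set-voter ranking $c_u$ above $p$ makes $p$ beat $c_u$, so $s_r(p)$ equals the minimum number of set-voters whose sets cover $U$, i.e. the minimum cover value.

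Next I would convert the score problem into the winner problem. I would add a control candidate $q$ whose replacement score is pinned to a fixed value $k$ (the target cover size) independently of the covering instance---for instance by making $q$ lose a single pairwise contest by a margin requiring exactly $k$ flips, with the flipping voters always available---while arranging every element-candidate $c_u$ to have replacement score strictly larger than $k$. Then $p$ is a winner if and only if $s_r(p)\le s_r(q)=k$, i.e. if and only if the covering instance admits a cover of size at most $k$. Since the covering problem is $\np$-complete and the construction is polynomial, this yields $\np$-hardness of the winner problem.

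The hard part will be the soundness (lower-bound) direction: I must verify that no replacement strategy beats the covering optimum. This requires justifying the two ``without loss of generality'' steps---that all replacement voters may be taken to rank $p$ first, and that the padding voters (and the gadget attached to $q$) are useless for covering the element-candidates---so that every replacement set making $p$ the Condorcet winner projects to a genuine cover of $U$ of the same or smaller size. Getting the padding margins and the global parity exactly right, so that each $r_{c_u}$ is precisely $1$ and no ties slip through under the strict-majority requirement, and simultaneously balancing the added voters so that $s_r(q)=k$ and each $s_r(c_u)>k$ hold after $q$'s gadget is inserted, is the delicate bookkeeping on which correctness of the reduction rests.
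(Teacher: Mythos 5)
Your high-level architecture is the same as the paper's: reduce from a covering problem (the paper uses \textsc{Vertex Cover}), characterize $s_r(p)$ as a minimum covering problem in which a replaced voter ``covers'' the candidates it ranks above $p$ (your WLOG step about $p$-first replacements and the margin-by-two accounting are both correct and implicitly used in the paper), pin a control candidate's score at exactly $k$ (this is the paper's candidate $z$, with $s_r(z)=k$), and force every other candidate's score above $k$. But your central gadget is arithmetically impossible, and the impossibility is precisely where the paper's key idea lives. You require every threshold $r_{c_u}$ to equal $1$, i.e., $p$ loses to each element-candidate $c_u$ by exactly one vote, while simultaneously all padding voters rank $p$ above every $c_u$. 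Losing to $c_u$ means a strict majority of \emph{all} $n$ voters rank $c_u$ above $p$; if only set-voters may do so, then every element of the universe lies in more than half of the sets. Then any uncovered element lies in more than $n/2$ of the at most $n$ sets, so some set contains more than half of the uncovered elements, and greedy covering terminates in $O(\log |U|)$ steps. Since your election is polynomial-size in the source instance, the minimum cover encodable by your construction is $O(\log)$ of the input size, and covering problems restricted to such $k$ are solvable in quasi-polynomial time---so the reduction cannot carry $\np$-hardness. This tension is unavoidable with unit thresholds: \emph{any} voter ranking $c_u$ above $p$ becomes a legitimate coverer of $u$ after a single flip, yet a majority of voters must rank $c_u$ above $p$.

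The paper resolves exactly this tension with a mechanism your proposal lacks. Its padding voters (the last $N-3$ voters) \emph{do} rank all element-candidates above $p$---so the strict-majority condition is met with each edge-candidate beating $p$ by one vote---which makes them spurious ``universal'' coverers; they are then neutralized by the auxiliary candidates $a$, $b$, $c$. Each of $a,b,c$ beats $p$ by a margin requiring exactly $k$ replacements of voters ranking it above $p$, and the padding voters are designed so that none ranks more than one of $\{a,b,c\}$ above $p$. Hence, within a budget of $k$, every replaced voter must rank both $a$ and $b$ above $p$, which only the first $N$ (vertex) voters do; and the ``above-$p$'' sets of those voters are exactly the legitimate vertex sets. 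Your control candidate $q$ pins the winning threshold but does nothing to constrain \emph{which} voters may be replaced; without an analogue of the budget-exhausting $a,b,c$ gadget, the soundness direction you correctly flag as ``the hard part'' does not merely require delicate bookkeeping---it fails.
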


\begin{proof}
  We provide a many-one polynomial-time reduction from {\sc Vertex Cover}.  
 An instance of {\sc Vertex Cover} is
  given by a pair $(\Gamma=(X, Y); k)$ where $\Gamma$ is a graph with
   a vertex set $X$ and an edge set $Y$, and $k\in{\mathbb N}$. It is a
   ``yes''-instance if $\Gamma$ has a vertex cover of size at most $k$,
   and a ``no''-instance otherwise.  

We can assume that $|X|$ is divisible by $3$, i.e., $|X|=3q$ for some $q\in{\mathbb N}$, 
  and $|X|>3k+6$. Indeed, to show that such a restricted problem is
  $\np$-hard, we can reduce the unrestricted version of {\sc Vertex Cover}
  to it by adding a large enough ``star'' which is not connected to
  the rest of the graph. We can also assume that $\Gamma$
  has no isolated vertices, and therefore $|Y|\ge |X|/2$.

Given an instance $(\Gamma=(X, Y);k)$ of {\sc Vertex Cover}
with $|X|=N$, $|Y|=M$, we construct an election
$E=(C, V, \vecpref)$ as follows. Suppose that
$X=\{x_1, \dots, x_{N}\}$, $Y=\{y_1, \dots, y_{M}\}$.
Our election will have $M+5$ candidates $y_1, \dots, y_{M}, a, b, c, p, z$
and $2N-3$ voters. 
We identify the candidates $y_1, \dots, y_{M}$ with the corresponding
edges of $\Gamma$.
The first $N$ voters correspond to the vertices of $\Gamma$.
Specifically, for $i=1, \dots, N$, 
let $Y_i\subset Y$ be the set of edges incident to $x_i$. Then the voter 
$v_i$ ranks $a$, $b$, and $c$ on top, followed by the candidates in $Y_i$, 
followed by $p$, followed by the candidates in $Y\setminus Y_i$, followed by $z$.
We will specify the relative ordering of $a$, $b$ and $c$, 
as well as the relative ordering of the candidates in $Y_i$
and $Y\setminus Y_i$ in $v_i$'s vote later on.

All remaining $N-3$ voters rank all candidates in $Y$ above
$a$, $b$, $c$ and $p$. Among those voters, 
there are $k-2$ voters with preferences $a\succ p\succ b\succ c$, 
$k-2$ voters with preferences $b\succ p\succ c\succ a$, 
$k-2$ voters with preferences $c\succ p\succ a\succ b$ and
$N-3k+3$ voters that prefer $p$ to $a$, $b$, and $c$.
Furthermore, $N-k-1$ of the last $N-3$ voters rank $z$ first, while
the remaining $k-2$ voters rank $z$ last.

First, it is easy to see that $s_r(z)=k$.
Indeed, there are $N-k-1$ voters that rank $z$ first, 
and  $N+k-2$ voters that rank $z$ last.
Replacing $k$ of the voters that rank $z$ last
with ones that rank him first will make $z$ a majority winner, 
whereas if we replace less than $k$ voters, more than
half of the voters would still rank $z$ last.
We will now argue that (a) $s_r(p)\le k$ if and only if
$\Gamma$ has a vertex cover of size at most $k$; (b)
we can complete the specification of the voters'
preferences so that the replacement score of any candidate 
other than $p$ and $z$ is greater than $k$.

The first part is easy. Indeed, 
suppose that we can make
$p$ the Condorcet winner by replacing at most $k$ voters.
We claim that all voters that we replace are among the first $N$ voters.
Indeed, in $E$ there are $N+k-2$ voters that prefer $a$ to $p$
and $N-k-1$ voters that prefer $p$ to $a$. Thus, we have to replace
exactly $k$ voters that prefer $a$ to $p$. 
On the other hand, in $E$ there are $N+k-2$ voters that prefer $b$ to $p$
and $N-k-1$ voters that prefer $p$ to $b$. Thus, we have to replace
exactly $k$ voters that prefer $b$ to $p$.
Now, there is no voter among the last $N-3$ voters 
that ranks both $a$ and $b$ above $p$, which proves
our claim.
Now, consider a candidate $y_i$, $i=1, \dots, M$. Among the first $N$
voters, there are exactly two voters (corresponding to the endpoints 
of the edge $y_i$) that prefer $y_i$ to $p$. Hence, altogether there
are $N-1$ voters that prefer $p$ to $y_i$ and $N-2$ voters
that prefer $y_i$ to $p$. Thus, for every candidate $y_i$ we have to replace
at least one voter that ranks him above $p$, and such a voter corresponds
to an endpoint of $y_i$. Hence, the set of replaced voters 
directly corresponds to a vertex cover of $\Gamma$.
Similarly, suppose that $X'\subset X$, $|X'|\le k$,
is a vertex cover for $\Gamma$. Then by replacing the corresponding voters
with voters that rank $p$ first
we can ensure that $p$ beats all candidates in $Y$.
Clearly, $p$ also beats $z$. Finally, if $|X'|<k$, we 
replace another $k-|X'|$ of the first $N$ voters
with voters that rank $p$ first.
After this step, $p$ beats $a$, $b$, and $c$, 
so he becomes the Condorcet winner after at most
$|X'|+k-|X'|=k$ voter replacements.

It remains to show that we can ensure that
none of the remaining candidates is close to being 
a Condorcet winner. For $a$, $b$, and $c$ this is easy to achieve.
Set $t=2N-3$, and
require that at least $t/3$ voters prefer $a$ to $b$
to $c$, at least $t/3$ voters prefer $b$ to $c$
to $a$, and at least $t/3$ voters prefert $c$ to $a$
to $b$ (recall that by our assumption $N$ is divisible by $3$).
This ensures that $a$, $b$, and $c$ prevent each other from becoming
the Condorcet winners: indeed, at least $2t/3$ voters
prefer $a$ to $b$, at least $2t/3$ voters
prefer $b$ to $c$, and at least $2t/3$ voters
prefer $c$ to $a$, so the replacement score of each of these candidates is at least 
$\lceil t/6\rceil > k$. 

We use a similar construction for the candidates in $Y$.
Specifically, if $2N-3\le M$, we would like the $i$-th voter, 
$i=1, \dots, 2N-3$, 
to have a preference ordering (as restricted to $Y$) given by
\begin{equation}\label{star}
y_{M-i+2}\succ\dots\succ y_{M}\succ y_1\succ\dots\succ y_{M-i+1}, 
\end{equation} 
where we identify $y_{M+j}$ with $y_{j}$.
If $2N-3>M$, we would like to divide the voters into
$\lceil\frac{2N-3}{M}\rceil$ groups, where the first
$\lfloor\frac{2N-3}{M}\rfloor$ groups have size $M$, 
and the remaining group has size at most $M$, so that the $i$-th
voter in each group has preference ordering given by~\eqref{star}.
Since $M\ge N/2$, there will be at most four groups. 
Under this preference profile, which we will denote by 
$O^*=(o^*_1, \dots, o^*_{2N-3})$, 
for each $j=1, \dots, M$
there are at most four voters that rank $y_j$ above $y_{j-1}$, 
i.e., the replacement score of each $y\in Y$ is at least $N-5>k$.
However, this conflicts with the requirement that 
for $i=1,\dots, N$ the $i$-th voter
prefers candidates in $Y_i$ to those in $Y\setminus Y_i$.
Thus, we require that his preferences are given by $o^*_{i}$ 
insomuch as this is possible, i.e., he ranks the candidates
within $Y_i$ and $Y\setminus Y_i$ according to $o^*_i$, 
but ranks all candidates in $Y_i$ above those in 
$Y\setminus Y_i$.
Also, for $i=N+1, \dots, 2N-3$ we require voter $i$
to rank the candidates in $Y$ according to $o^*_i$.
Now, for each $y_j$ there are at most two sets $Y_i$
such that $y_j\in Y_i$. 
It follows that for each $j=1, \dots, M$, at most 
six voters prefer $y_j$ to $y_{j-1}$ (where $y_0=y_{M}$), 
so none of the candidates in $Y$ is close to being the Condorcet winner.
\end{proof}

For Young's rule, the winner determination problem is known to be
complete for the complexity class
$\Theta_2^p$~\cite{rot-spa-vog:j:young}.  It seems likely that this is
also the case for the voter replacement rule.

Observe that out of the three voting rules considered in this section,
one (Maximin) has an efficient winner determination procedure, while
the other two do not (assuming, as is currently believed, that no
$\np$-hard problem can be solved efficiently---i.e., in polynomial
time---for all instances).  The intuitive reason for this difference
is that
% There is a simple reason
% for this discrepancy. Indeed, 
when we add voters 
%in order 
to make a
candidate $c$ the Condorcet winner, we only need to add voters that
rank $c$ first, and, moreover, it does not matter how these voters
rank other candidates. On the other hand, when we delete or replace
voters, we have to choose which voters to remove, and this decision is
not straightforward.

\section{Conclusions and Future Research}
\label{sec:conclusions}

%Distance rationalizability has proved to be useful
 %for proving complexity results about entire families
%of voting rules in a systematic way~\cite{elk-fal-sli:c:votewise-dr,elk-fal-sli:c:good-rationalizations}.

We have shown that two classical voting rules, Young's rule and 
Maximin, can be distance rationalized by the Condorcet consensus class and distances
of Hamming type. This further advances the project of classifying common voting rules by a consensus class and a distance
(only some multistage elimination rules are now left without known distance rationalizations). 
We have also shown that a the existing distance rationalization of Young's rule in fact leads to a somewhat different rule. 

Now the question of quality of such rationalization comes to the fore. Indeed, some distances are more natural than others so are the consensus classes. In particular, Kendall tau distance\footnote{also known as Dodgson distance, bubble-sort distance, swap distance, edit distance etc.} seems 
a particularly natural one and it is employed in distance-rationalizations of many rules.
So are the unanimity consensus and Condorcet consensus classes.  On the other hand,  \citeA{elk-fal-sli:c:votewise-dr} have shown that any rule can be distance rationalized if  unnatural consensus classes or unnatural distances are allowed. 
\citeA{elk-fal-sli:c:votewise-dr} identified  a certain family of distances,  called votewise distances (for example, Kendall tau and Hamming distances are votewise), as those that are
particularly natural. Not all voting rules can be rationalized with the use of those distances, in particular, STV cannot.  It is thus interesting  if it is possible to rationalize Young's rule and Maximin using votewise distances.

Another natural research direction is  to seek further connections between distance
rationalizability and  maximum likelihood
estimation approaches (see, e.g., ~\citeA{con-san:c:likelihood-estimators,con-rog-xia:c:mle}).

\section{acknowledgements}
  Some of the results of this paper were previously presented at the
  12th Conference on Theoretical Aspects of Rationality and Knowledge
  (TARK-09) under the title ``On Distance Rationalizability of Some
  Voting Rules'', and the authors would like to thank the anonymous
  TARK referees for their thorough and tremendously helpful work.
  The authors are also grateful to Felix Brandt, Paul Harrenstein, 
  Jerome Lang, and Alexis Tsoukias for useful discussions, feedback, 
  and pointers to the literature.   
  Piotr Faliszewski was supported by AGH University of Science and
  Technology Grant no. 11.11.120.865, Foundation for Polish
  Science's program Homing/Powroty, and Polish Ministry of
  Science and Higher Education grant N-N206-378637. Edith Elkind was supported by
  ESRC (grant no. ES/F035845/1), EPSRC (grant no. GR/T10664/01), and NRF
  Research Fellowship (NRF-RF2009-08). Arkadii Slinko was supported by the Science Faculty of the University of Auckland FRDF grant 3624495/9844.
%\end{acknowledgements}

\bibliography{grypiotr2006}

\begin{thebibliography}{}

\bibitem[\protect\citeauthoryear{%
Baigent%
}{%
Baigent%
}{%
{\protect\APACyear{1987}}%
}]{%
bai:j:distance-rationalisation}%
\APACinsertmetastar{%
bai:j:distance-rationalisation}%
Baigent, N.%
%
\unskip\
\newblock
\APACrefYearMonthDay{1987}{}{}.
\newblock
\BBOQ{}\APACrefatitle{Metric rationalisation of social choice functions
  according to principles of social choice}{Metric rationalisation of social
  choice functions according to principles of social choice}.\BBCQ{}
\newblock
\APACjournalVolNumPages{Mathematical Social Sciences}{13}{1}{59--65}.
\PrintBackRefs{\CurrentBib}

\bibitem[\protect\citeauthoryear{%
Conitzer%
, Rognlie%
\BCBL{}\ \BBA{} Xia%
}{%
Conitzer%
\ \protect\BOthers{.}}{%
{\protect\APACyear{2009}}%
}]{%
con-rog-xia:c:mle}%
\APACinsertmetastar{%
con-rog-xia:c:mle}%
Conitzer, V.%
, Rognlie, M.%
\BCBL{}\ \BBA{} Xia, L.%
%
\unskip\
\newblock
\APACrefYearMonthDay{2009}{{\APACmonth{07}}}{}.
\newblock
\BBOQ{}\APACrefatitle{Preference Functions That Score Rankings and Maximum
  Likelihood Estimation}{Preference functions that score rankings and maximum
  likelihood estimation}.\BBCQ{}
\newblock
\BIn{} \APACrefbtitle{Proceedings of the 21st International Joint Conference on
  Artificial Intelligence.}{Proceedings of the 21st international joint
  conference on artificial intelligence.}
\newblock
\APACaddressPublisher{}{AAAI Press}.
\newblock
\APACrefnote{To appear}
\PrintBackRefs{\CurrentBib}

\bibitem[\protect\citeauthoryear{%
Conitzer%
\ \BBA{} Sandholm%
}{%
Conitzer%
\ \BBA{} Sandholm%
}{%
{\protect\APACyear{2005}}%
}]{%
con-san:c:likelihood-estimators}%
\APACinsertmetastar{%
con-san:c:likelihood-estimators}%
Conitzer, V.%
\BCBT{}\ \BBA{} Sandholm, T.%
%
\unskip\
\newblock
\APACrefYearMonthDay{2005}{{\APACmonth{07}}}{}.
\newblock
\BBOQ{}\APACrefatitle{Common Voting Rules as Maximum Likelihood
  Estimators}{Common voting rules as maximum likelihood estimators}.\BBCQ{}
\newblock
\BIn{} \APACrefbtitle{Proceedings of the 21st Conference in Uncertainty in
  Artificial Intelligence}{Proceedings of the 21st conference in uncertainty in
  artificial intelligence}\ (\BPGS\ 145--152).
\newblock
\APACaddressPublisher{}{AUAI Press}.
\PrintBackRefs{\CurrentBib}

\bibitem[\protect\citeauthoryear{%
Downey%
\ \BBA{} Fellows%
}{%
Downey%
\ \BBA{} Fellows%
}{%
{\protect\APACyear{1999}}%
}]{%
dow-fel:b:parameterized}%
\APACinsertmetastar{%
dow-fel:b:parameterized}%
Downey, R.%
\BCBT{}\ \BBA{} Fellows, M.%
%
\unskip\
\newblock
\APACrefYear{1999}.
\newblock
\APACrefbtitle{Parameterized Complexity}{Parameterized complexity}.
\newblock
\APACaddressPublisher{}{Springer-Verlag}.
\PrintBackRefs{\CurrentBib}

\bibitem[\protect\citeauthoryear{%
Elkind%
, Faliszewski%
\BCBL{}\ \BBA{} Slinko%
}{%
Elkind%
\ \protect\BOthers{.}}{%
{\protect\APACyear{2009}}%
}]{%
elk-fal-sli:c:votewise-dr}%
\APACinsertmetastar{%
elk-fal-sli:c:votewise-dr}%
Elkind, E.%
, Faliszewski, P.%
\BCBL{}\ \BBA{} Slinko, A.%
%
\unskip\
\newblock
\APACrefYearMonthDay{2009}{}{}.
\newblock
\BBOQ{}\APACrefatitle{On the Role of Distances in Defining Voting Rules}{On the
  role of distances in defining voting rules}.\BBCQ{}
\newblock
\BIn{} (\BPGS\ 375--382).
\newblock
\APACaddressPublisher{}{IFAAMAS}.
\PrintBackRefs{\CurrentBib}

\bibitem[\protect\citeauthoryear{%
Faliszewski%
, Hemaspaandra%
\BCBL{}\ \BBA{} Hemaspaandra%
}{%
Faliszewski%
\ \protect\BOthers{.}}{%
{\protect\APACyear{2009}}%
}]{%
fal-hem-hem:j:bribery}%
\APACinsertmetastar{%
fal-hem-hem:j:bribery}%
Faliszewski, P.%
, Hemaspaandra, E.%
\BCBL{}\ \BBA{} Hemaspaandra, L.%
%
\unskip\
\newblock
\APACrefYearMonthDay{2009}{}{}.
\newblock
\BBOQ{}\APACrefatitle{How Hard is Bribery in Elections?}{How hard is bribery in
  elections?}\BBCQ{}
\newblock
\APACjournalVolNumPages{Journal of Artificial Intelligence
  Research}{35}{}{485--532}.
\PrintBackRefs{\CurrentBib}

\bibitem[\protect\citeauthoryear{%
Garey%
\ \BBA{} Johnson%
}{%
Garey%
\ \BBA{} Johnson%
}{%
{\protect\APACyear{1979}}%
}]{%
gar-joh:b:int}%
\APACinsertmetastar{%
gar-joh:b:int}%
Garey, M.%
\BCBT{}\ \BBA{} Johnson, D.%
%
\unskip\
\newblock
\APACrefYear{1979}.
\newblock
\APACrefbtitle{Computers and Intractability: {A} Guide to the Theory of
  {NP}-Completeness}{Computers and intractability: {A} guide to the theory of
  {NP}-completeness}.
\newblock
\APACaddressPublisher{}{{W. H. Freeman and Company}}.
\PrintBackRefs{\CurrentBib}

\bibitem[\protect\citeauthoryear{%
Klamler%
}{%
Klamler%
}{%
{\protect\APACyear{2005}}%
{\protect\APACexlab{{\protect\BCnt{1}}}}}]{%
kla:j:borda-vs-condorcet}%
\APACinsertmetastar{%
kla:j:borda-vs-condorcet}%
Klamler, C.%
%
\unskip\
\newblock
\APACrefYearMonthDay{2005{\protect\BCnt{1}}}{}{}.
\newblock
\BBOQ{}\APACrefatitle{Borda and Condorcet: {S}ome Distance Results}{Borda and
  condorcet: {S}ome distance results}.\BBCQ{}
\newblock
\APACjournalVolNumPages{Theory and Decision}{59}{2}{97--109}.
\PrintBackRefs{\CurrentBib}

\bibitem[\protect\citeauthoryear{%
Klamler%
}{%
Klamler%
}{%
{\protect\APACyear{2005}}%
{\protect\APACexlab{{\protect\BCnt{2}}}}}]{%
kla:j:copeland-distance}%
\APACinsertmetastar{%
kla:j:copeland-distance}%
Klamler, C.%
%
\unskip\
\newblock
\APACrefYearMonthDay{2005{\protect\BCnt{2}}}{}{}.
\newblock
\BBOQ{}\APACrefatitle{The Copeland rule and Condorcet's principle}{The copeland
  rule and condorcet's principle}.\BBCQ{}
\newblock
\APACjournalVolNumPages{Economic Theory}{25}{3}{745--749}.
\PrintBackRefs{\CurrentBib}

\bibitem[\protect\citeauthoryear{%
Meskanen%
\ \BBA{} Nurmi%
}{%
Meskanen%
\ \BBA{} Nurmi%
}{%
{\protect\APACyear{2008}}%
}]{%
mes-nur:b:distance-realizability}%
\APACinsertmetastar{%
mes-nur:b:distance-realizability}%
Meskanen, T.%
\BCBT{}\ \BBA{} Nurmi, H.%
%
\unskip\
\newblock
\APACrefYearMonthDay{2008}{}{}.
\newblock
\BBOQ{}\APACrefatitle{Closeness Counts in Social Choice}{Closeness counts in
  social choice}.\BBCQ{}
\newblock
\BIn{} M.~Braham\ \BBA{} F.~Steffen\ (\BEDS), \APACrefbtitle{Power, Freedom,
  and Voting.}{Power, freedom, and voting.}
\newblock
\APACaddressPublisher{}{Springer-Verlag}.
\PrintBackRefs{\CurrentBib}

\bibitem[\protect\citeauthoryear{%
Niedermeier%
}{%
Niedermeier%
}{%
{\protect\APACyear{2006}}%
}]{%
nie:b:invitation-fpt}%
\APACinsertmetastar{%
nie:b:invitation-fpt}%
Niedermeier, R.%
%
\unskip\
\newblock
\APACrefYear{2006}.
\newblock
\APACrefbtitle{Invitation to Fixed-Parameter Algorithms}{Invitation to
  fixed-parameter algorithms}.
\newblock
\APACaddressPublisher{}{Oxford University Press}.
\PrintBackRefs{\CurrentBib}

\bibitem[\protect\citeauthoryear{%
Papadimitriou%
}{%
Papadimitriou%
}{%
{\protect\APACyear{1994}}%
}]{%
pap:b:complexity}%
\APACinsertmetastar{%
pap:b:complexity}%
Papadimitriou, C.%
%
\unskip\
\newblock
\APACrefYear{1994}.
\newblock
\APACrefbtitle{Computational Complexity}{Computational complexity}.
\newblock
\APACaddressPublisher{}{Addison-Wesley}.
\PrintBackRefs{\CurrentBib}

\bibitem[\protect\citeauthoryear{%
Rothe%
, Spakowski%
\BCBL{}\ \BBA{} Vogel%
}{%
Rothe%
\ \protect\BOthers{.}}{%
{\protect\APACyear{2003}}%
}]{%
rot-spa-vog:j:young}%
\APACinsertmetastar{%
rot-spa-vog:j:young}%
Rothe, J.%
, Spakowski, H.%
\BCBL{}\ \BBA{} Vogel, J.%
%
\unskip\
\newblock
\APACrefYearMonthDay{2003}{}{}.
\newblock
\BBOQ{}\APACrefatitle{Exact Complexity of the Winner Problem for {Young}
  Elections}{Exact complexity of the winner problem for {Young}
  elections}.\BBCQ{}
\newblock
\APACjournalVolNumPages{Theory of Computing Systems}{36}{4}{375--386}.
\PrintBackRefs{\CurrentBib}

\bibitem[\protect\citeauthoryear{%
Young%
}{%
Young%
}{%
{\protect\APACyear{1977}}%
}]{%
you:j:extending-condorcet}%
\APACinsertmetastar{%
you:j:extending-condorcet}%
Young, H.%
%
\unskip\
\newblock
\APACrefYearMonthDay{1977}{}{}.
\newblock
\BBOQ{}\APACrefatitle{Extending {Condorcet}'s Rule}{Extending {Condorcet}'s
  rule}.\BBCQ{}
\newblock
\APACjournalVolNumPages{Journal of Economic Theory}{16}{2}{335--353}.
\PrintBackRefs{\CurrentBib}

\end{thebibliography}

\appendix

\section{Computational Complexity Preliminaries}
In this section we give a brief review of the notions from
computational complexity theory.  The readers interested in a more in-depth
treatment of computational complexity are pointed to the classic
textbooks of~\citeA{gar-joh:b:int} and~\citeA{pap:b:complexity}.\par\medskip

%\subsection{Computational Complexity Preliminaries}
Computational complexity theory is a branch of theoretical computer
science whose goal (or, rather, one of many goals) is to classify
computational problems with respect to the amount of resources they
require for obtaining solutions. The most standard resource type that
complexity theorists study is the time (i.e., the number of basic
computational steps) needed to find a solution.

Typically, instead of studying problems that ask one to compute some
mathematical object or to optimize some function, computational
complexity theory focuses on decision problems, that is, problems with
a \emph{yes/no} answer.  Decision problems are easier to work with
and, in most cases, preserve the resource requirements of the more
involved problems they are based on.  For instance, given an election $E =
(C,V)$ and some voting rule $\R$, instead of asking ``who won election
$E$ according to $\R$'' we can ask, for each $c \in C$, ``did $c$ win
election $E$ according to $\R$.''

One of the most crude, but at the same time very practical and
natural, ways to classify decision problems is to classify them as
either belonging to the class $\p$ (that is, the class of problems
that can be solved in polynomial time), or being $\np$-hard.  We will
explain what it means for a problem to be $\np$-hard in the next
paragraph. The classification is not perfect as some problems are
neither in $\p$ nor are $\np$-hard, but most problems encountered in
practice indeed fall into one of these two groups. The
problems in $\p$ are considered computationally easy because given an
instance $I$ of a problem from $\p$, it is possible to solve it using
at most polynomially many steps (with respect to the number of bits
needed to encode $I$).  On the other hand, it is widely believed that
if a problem is $\np$-hard then, in general, to solve its instance $I$
one needs to make at least an exponential number of steps. In the next
section we will show that, indeed, the problem of deciding whether a
given candidate $c$ is a winner with respect to the voter replacement
rule is $\np$-hard and, as a result, that it is computationally
difficult (as least as long as $\p \neq \np$; which is a widely
believed conjecture).

How is the notion of $\np$-hardness defined?  To answer this question
we need to describe the class $\np$ first.  However, instead of
defining the class formally, we find it more practical to provide the
intuition behind the class and point the readers to the classic texts
of~\citeA{gar-joh:b:int} and~\citeA{pap:b:complexity} for technical
details. Let us start with the following problem as an example.

\begin{definition}\label{def:vertex-cover}
  An instance of {\sc Vertex Cover} is given by a pair $(\Gamma=(X,
  Y); k)$ where $\Gamma$ is a graph with a vertex set $X$ and an edge
  set $Y$, and $k\in{\mathbb N}$. It is a ``yes''-instance if $\Gamma$
  has a vertex cover of size at most $k$ (i.e., if it is possible to
  pick $k$ vertices such that each edge is incident to at least one of
  the selected vertices), and a ``no''-instance otherwise.
\end{definition}

In {\sc Vertex Cover} we ask whether a subset of vertices with a certain
property exists. It is not at all clear how to compute such a set
efficiently (that is, in polynomial time) but, if we were given some
set of vertices, we could easily verify if it indeed satisfies our
requirements. Namely, we would check if every edge is incident to at
least one of the vertices from the set and if the set contains at most
$k$ elements.  Thus, while it seems computationally hard to solve {\sc
  Vertex Cover}, it is very easy to verify if a solution provided by
someone else is correct.\footnote{Note that here by ``solution'' we do
  not mean the \emph{yes/no} answer but rather the underlying
  mathematical object the decision problems asks about.}  Now, the
class $\np$ is exactly the class of problems for which, given an
instance $I$ and a solution $s$ for it, it is possible to verify the
solution $s$ in time polynomial in the number of bits encoding $I$.  A
decision problem $A$ is called $\np$-hard if it is at least as hard as
the hardest problem in $\np$. To formalize the notion of ``is at least
as hard as'' we use many-one polynomial-time reductions.

\begin{definition}
Let $A$ and $B$ be two decision problems. We say that $A$ many-one
reduces to $B$ in polynomial time if there exists a function $f$
such that:
\begin{enumerate}
\item $f$ is computable in polynomial time, and
\item for each instance $x$ of the problem $A$ it holds that the
  answer for $x$ is ``yes'' if and only if the answer for $f(x)$ is
  ``yes.''
\end{enumerate}
\end{definition}
In other words, if a problem $A$ many-one polynomial-time reduces to
a problem $B$, then it is easy to translate questions in the format of
$A$ to the questions in the format of $B$, while preserving the
answers. A decision problem $A$ is called $\np$-hard if all problems
in $\np$ many-one reduce to it in polynomial time. If $A$ is both
$\np$-hard \emph{and} a member of $\np$ then $A$ is called
$\np$-complete.  While at first it may seem that $\np$-complete and
$\np$-hard problems might not even exist, in fact many hundreds of
natural $\np$-complete problems have been identified (see the text
of~\citeA{gar-joh:b:int} for a very early list).  For example, {\sc
  Vertex Cover} is $\np$-complete.

The next proposition is the standard tool for proving that
a given problem is $\np$-hard. To show that a problem is $\np$-hard
it suffices to show that some previously known $\np$-hard
problem many-one reduces to it in polynomial time.

\begin{proposition}
  Let $A$ be a decision problem and let $B$ be some known $\np$-hard
  problem. If $B$ many-one reduces to $A$ in polynomial time then $A$
  is $\np$-hard.
\end{proposition}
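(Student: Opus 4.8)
The plan is to prove the proposition by establishing that polynomial-time many-one reducibility is transitive, and then invoking the definition of $\np$-hardness directly. Recall that $A$ is $\np$-hard precisely when every problem in $\np$ many-one reduces to $A$ in polynomial time. So I would fix an arbitrary problem $L\in\np$ and aim to exhibit a single polynomial-time many-one reduction from $L$ to $A$; since $L$ is arbitrary, producing such a reduction for each $L$ suffices.

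First I would use the hypothesis that $B$ is $\np$-hard: because $L\in\np$, there is a polynomial-time computable function $g$ such that for every instance $x$ of $L$, $x$ is a ``yes''-instance of $L$ if and only if $g(x)$ is a ``yes''-instance of $B$. Second, I would use the hypothesis that $B$ many-one reduces to $A$ in polynomial time, which gives a polynomial-time computable function $f$ such that for every instance $w$ of $B$, $w$ is a ``yes''-instance of $B$ if and only if $f(w)$ is a ``yes''-instance of $A$. The natural candidate reduction from $L$ to $A$ is the composition $h=f\circ g$, that is, $h(x)=f(g(x))$.

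Then I would verify the two defining properties of a reduction for $h$. Answer preservation is a straightforward chaining of the two biconditionals: $x$ is a ``yes''-instance of $L$ iff $g(x)$ is a ``yes''-instance of $B$ iff $f(g(x))$ is a ``yes''-instance of $A$, so $x$ is a ``yes''-instance of $L$ iff $h(x)$ is a ``yes''-instance of $A$. The step that requires genuine care—and the main (if modest) obstacle—is establishing that $h$ is computable in polynomial time. This does \emph{not} follow merely from the slogan ``a composition of two polynomial-time maps,'' because $f$ is applied not to $x$ but to $g(x)$, whose length could a priori be much larger than $|x|$. The key observation is that a machine running within a time bound given by a polynomial $q$ can write at most $q(|x|)$ output symbols, so $|g(x)|\le q(|x|)$. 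Hence if $f$ runs within a time bound given by a polynomial $r$, computing $f(g(x))$ takes time at most $q(|x|)+r(|g(x)|)\le q(|x|)+r(q(|x|))$, which is again a polynomial in $|x|$, since polynomials are closed under composition and addition.

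Finally, since $L\in\np$ was arbitrary and we have produced a polynomial-time many-one reduction $h$ from $L$ to $A$, it follows that every problem in $\np$ many-one reduces to $A$ in polynomial time. By definition this is exactly the assertion that $A$ is $\np$-hard, completing the argument.
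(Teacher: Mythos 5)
Your proof is correct and takes essentially the same route as the paper, which simply observes that the proposition follows from the transitivity of polynomial-time many-one reducibility. You merely fill in the details the paper leaves implicit---composing the two reductions and noting that $|g(x)|$ is bounded by $g$'s polynomial time bound, so the composition remains polynomial---and your attention to that latter point is exactly the right place to be careful.
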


This proposition follows immediately from the observation that the
relation ``polynomial-time many-one reduces to'' is transitive.  We
are now ready to prove Theorem~\ref{thm:nphard}.

\end{document}